\newcommand{\lyxline}[1][1pt]{%
  \par\noindent%
  \rule[.5ex]{\linewidth}{#1}\par}
  \theoremstyle{plain}
  \newtheorem{lem}{Lemma}
\theoremstyle{plain}
\newtheorem{thm}{Theorem}
  \theoremstyle{plain}
  \newtheorem{cor}{Corollary}
\begin{document}

\title{\vspace{-0.1in}
Downlink Coverage Analysis in a Heterogeneous Cellular Network}

\author{Prasanna Madhusudhanan, Juan G. Restrepo, Youjian (Eugene) Liu, Timothy X Brown \thanks{P. Madhusudhanan, Y. Liu, and T. X. Brown are with the Department of Electrical, Computer and Energy Engineering; J. G. Restrepo is with the Department of Applied Mathematics; T. X. Brown is also with the Interdisciplinary Telecommunications Program, at the University of Colorado, Boulder, CO 80309-0425 USA. Email: \{mprasanna, juanga, eugeneliu, timxb\}@colorado.edu}\vspace{-0.4in}
}
\maketitle
\begin{abstract}
In this paper, we consider the downlink signal-to-interference-plus-noise
ratio (SINR) analysis in a heterogeneous cellular network with \emph{K}
tiers. Each tier is characterized by a base-station (BS) arrangement
according to a homogeneous Poisson point process with certain BS density,
transmission power, random shadow fading factors with arbitrary distribution,
arbitrary path-loss exponent and a certain bias towards admitting
the mobile-station (MS). The MS associates with the BS that has the
maximum SINR under the open access cell association scheme. For such
a general setting, we provide an analytical characterization of the
coverage probability at the MS. \end{abstract}
\begin{IEEEkeywords}
Multi-tier networks, Cellular Radio, Co-channel Interference, Fading
channels, Poisson point process. 
\end{IEEEkeywords}

\section{Introduction\label{sec:Introduction}}

\IEEEPARstart{T}{he} heterogeneous cellular network is a complex
overlay of multiple cellular communication networks such as the macrocells,
microcells, picocells, and femtocells. Research has shown heterogeneous
networks support greater end-user data-rate and throughput as well
as better indoor and cell-edge coverage. This has further led to its
inclusion as an important feature for the 4G cellular networks \cite{Report2011,Qualcomm2010,Chandrasekhar2008,Lagrange1997}.

In the conception of the heterogeneous cellular network, one is looking
at an overlay of several dense, irregularly and often completely randomly
deployed networks (namely microcells, picocells and femtocells) with
a limited coverage area, all deployed on top of the conventional macrocell
network. These networks consist of base-stations (BSs) with different
transmission powers, different traffic-load carrying capability, and
different radio environment which is based on the locations in which
they are deployed. All these sum up to an extremely complicated network.
As a result, the analysis of such a network through system simulations
(which is largely the approach taken for studying the conventional
macrocell network) is hampered by the curse of dimensionality due
to the many parameters involved in designing and modeling each of
the representative networks that make up the heterogeneous network.
For this reason, we seek to develop an analytical model that captures
all the design scenarios of interest. 

The analysis in this paper applies to the downlink performance in
terms of the signal-to-interference-plus-noise ratio (SINR) at the
mobile-station (MS), where the MS associates itself with the BS that
has the maximum SINR at the MS. The SINR at the MS is an important
metric that determines the outage probability (or coverage probability),
capacity and throughput of a cellular network in the downlink, and
the characterization of the distribution of SINR aids in the complete
understanding of the SINR metric.

The recent focus on stochastic geometry as a means to study and analyze
large systems of essentially randomly deployed nodes, combined with
the fact that certain representative networks (femtocell networks),
that are a part of the heterogeneous network, are formed due to the
end-user deployments, thus falling under the random deployment category,
has naturally led to the application of stochastic geometry in modeling
the heterogeneous network. While there is a vast literature corresponding
to stochastic geometric modeling and analysis of systems with randomly
deployed nodes, there have been three independent efforts to applying
stochastic geometry to study heterogeneous networks \cite{Dhillon2011,Mukherjee2012,Madhusudhanan2011}.
All the 3 papers modeled the heterogeneous network as being composed
of multiple tiers where the BSs in each tier are deployed according
to a homogeneous Poisson point process, independent of the other tiers.
While the first two references focused on the case where the fading
coefficients were modeled as independent and identically distributed
(i.i.d.) random variables with exponential distribution and obtained
closed-form expressions for the distribution of SINR at the MS in
a heterogeneous network for positive values of SINR (in dB), our work
has provided semi-analytical expressions for the same quantities and
holds for arbitrary fading distributions and for all values of SINR.

Jo et. al. \cite{Jo2011} have extended their results to obtain a
complete characterization of the SINR in the heterogeneous network
where the exponents for the power-law path-loss model were different
for different tiers of the heterogeneous network, but for the case
when the MS associates itself with the nearest BS rather than the
BS that maximizes the SINR, and when an exponential distribution was
assumed for fading. With this they could study the effect of varying
cell-association biasing.

In this paper, we incorporate all the modeling details available in
the literature for studying the heterogeneous cellular network, introduce
a few additional important features, and obtain the complete characterization
of the downlink SINR, and the coverage probability (i.e. 1 - outage
probability) in a heterogeneous network, for the original case where
the MS associates with the BS with the maximum SINR at the MS. In
particular, the following paragraph lists our contributions.

From the modeling stand-point, the transmission and channel characteristics
of a BS corresponding to a given tier include the transmission power,
bias factor, path-loss exponent, distribution of fading coefficients,
and the SINR threshold to be satisfied by the MS in order to communicate
with the BS. These may be different for different tiers. This is more
general than the setting in the previous heterogeneous network analysis
literature. For this situation, we have obtained accurate characterization
for the SINR distribution as well as the coverage probability for
a given MS. Since we are able to handle arbitrary fading distributions,
that are further different for different tiers, we are able to consider
practical fading models and obtain the coverage probabilities in those
cases. Such a strong result is likely to be useful for studying and
analyzing realistic scenarios. The following section introduces the
system model.

\section{System Model\label{sec:modelreview}}

This section describes the various elements used to model the wireless
network, namely, the BS layout, the radio environment, and the performance
metrics of interest.

\subsubsection{BS Layout\label{sub:BS-Layout}}

The BS layout for the $k^{\mathrm{th}}$ tier, where $k\in\left\{ 1,\dots,\ K\right\} ,$
is according to independent homogeneous Poisson point process with
density, $\lambda_{k}.$

\subsubsection{Cell-Association policy\label{sub:Cell-Association-policy}}

The MS associates itself to the BS corresponding to the the strongest
instantaneous received power (the BS from which the MS has the maximum
$SINR$). We focus on the open access scheme in this paper where the
MS can freely communicate with any of the $K$ tiers.

\subsubsection{Radio Environment\label{sub:Radio-Environment}}

The received power at the MS from the $j^{th}$ BS belonging to the
$k^{\mathrm{th}}$ tier at a distance $D_{kj}\ (>0)$ from the MS
is given by $P=P_{k}\Psi_{kj}D_{kj}^{-\varepsilon_{k}}B_{k},$ where
$\left(P_{k},\ \Psi_{kj},\ B_{k},\ \varepsilon_{k}\right)$ corresponds
to the constant transmission power, random channel gain coefficient,
constant bias coefficient and the constant path-loss exponent ($>2$)
of the $k^{\mathrm{th}}$ tier, respectively. Further, $\Psi_{kj}$
can assume any arbitrary distribution as long as $\mathbb{E}\left[\Psi_{kj}^{2\left/\varepsilon_{k}\right.}\right]<\infty.$
It is independent and identically distributed (i.i.d.) across all
the BSs of the $k^{\mathrm{th}}$ tier, and independent of the other
tiers and the underlying random process governing the BS arrangement.
For this reason, we drop the subscript $j$ and for compactness denote
the expectation by $\mathbb{E}\Psi_{k}^{\frac{2}{\varepsilon_{k}}}.$
In essence, the different tiers have their own arbitrary fading distributions.
All BSs of the $k^{\mathrm{th}}$ tier adopt an identical bias factor,
$B_{k},$ which is some positive value. By manipulating the $B_{k}$'s
corresponding to the different tiers, we may regulate the traffic
from one tier to the other.

\subsubsection{Performance Metric\label{sub:Performance-Metric}}

In this paper, we are concerned with the SINR at a given MS. Without
loss of generality, the MS is assumed to be located at the origin
of the plane. SINR is defined as the ratio of the received signal
from the desired BS to the sum of the interferences from all the BSs
belonging to all the $K$ tiers and the background noise. As a result
\begin{eqnarray}
SINR & = & \frac{P_{k}\Psi_{kj}R_{kj}^{-\varepsilon_{k}}B_{k}}{\underset{\left(m,l\right)\ne\left(k,\ j\right)}{\sum_{m=1}^{K}\sum_{l=1}^{\infty}}P_{m}\Psi_{ml}R_{ml}^{-\varepsilon_{m}}B_{m}+\eta},\label{eq:SINRDefinition}\end{eqnarray}
where $\left(k,\ j\right)$ corresponds to the indices of the tier
and the corresponding BS, which has the maximum received power at
the MS, and $\eta$ is the power corresponding to the background noise.
Further, each user can successfully communicate with its desired BS
provided the $SINR$ is above the minimum threshold value, $\beta_{k},$
that is a characteristic of the tier. As a result, the coverage probability
is defined as the probability that the MS is able to communicate with
a BS using a specific BS association protocol.

\section{Useful Lemmas\label{sec:SINRcharacteristics}}

We will first present some lemmas which will be useful in deriving
the coverage probability.
\begin{lem}
\textup{\label{lem:StochasticEqLemma}The SINR at the MS has the same
distribution as that of a single-tier network where all BSs in the
network have unity transmission power, channel gain and path-loss
exponent, and are arranged according to a non-homogeneous 1-D Poisson
point process with BS density function $\lambda\left(r\right)=\sum_{l=1}^{K}\lambda_{l}\left(P_{l}B_{l}\right)^{\frac{2}{\varepsilon_{l}}}\mathbb{E}\Psi_{l}^{\frac{2}{\varepsilon_{l}}}r^{\frac{2}{\varepsilon_{l}}-1},$
$r\ge0,$ as long as $\mathbb{E}\Psi_{l}^{\frac{2}{\varepsilon_{l}}}<\infty,\ \forall\ l=1,\ 2,\cdots,\ K.$
That is} \begin{eqnarray}
\, & \, & \,\nonumber \\
 & SINR & =_{\mathrm{st}}\left.\frac{\tilde{R}_{1}^{-1}}{\sum_{i=2}^{\infty}\tilde{R}_{i}^{-1}+\eta}\right|_{\lambda\left(r\right)},\label{eq:SINRDistributionEq}\end{eqnarray}
\textup{ where $=_{\mathrm{st}}$ indicates the equivalence in distribution;
and $\left\{ \tilde{R}_{i}\right\} _{i=1}^{\infty}$ is the ascendingly
ordered distances of the BSs from the origin, obtained from a non-homogeneous
1-D Poisson point process with BS density function $\lambda\left(r\right)$
defined above.}\end{lem}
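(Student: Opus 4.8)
The plan is to show that the collection of received powers at the MS, when transformed into an equivalent ``distance'' variable, forms a one-dimensional Poisson point process whose intensity is exactly $\lambda(r)$. The key observation is that the SINR depends on the BS locations only through the received powers $P_m \Psi_{ml} D_{ml}^{-\varepsilon_m} B_m$, and the ratio defining the SINR is invariant under any common monotone rescaling of these powers. So I would define, for each BS of tier $l$ at distance $D$ with fading $\Psi$, an equivalent distance $\tilde{R} = \left(P_l B_l \Psi\right)^{-1/\varepsilon_l} D$, chosen precisely so that the received power equals $\tilde{R}^{-1}$ with unit power, unit gain, and unit path-loss exponent. Under this map the numerator and each interference term in \eqref{eq:SINRDefinition} become $\tilde{R}^{-1}$, matching the right-hand side of \eqref{eq:SINRDistributionEq}, with $\tilde{R}_1$ the smallest equivalent distance (largest received power) selected by the cell-association rule.

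The heart of the argument is then a Mapping/Marking Theorem computation for Poisson point processes. First I would treat tier $l$ in isolation: its BSs form a homogeneous 2-D Poisson process of intensity $\lambda_l$, which I would project to a 1-D process in the radial coordinate $D$ using the standard fact that the ordered distances of a 2-D homogeneous PPP of intensity $\lambda_l$ form a 1-D inhomogeneous PPP with intensity $2\pi\lambda_l D$. Next I would attach to each point the i.i.d.\ fading mark $\Psi$ (an independent marking, justified by the i.i.d.\ assumption in Section~\ref{sub:Radio-Environment}) and apply the Mapping Theorem to the transformation $(D,\Psi)\mapsto \tilde{R}=(P_l B_l \Psi)^{-1/\varepsilon_l} D$. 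The resulting image is a 1-D PPP whose intensity I would obtain by computing the expected number of points with $\tilde{R}\le r$, i.e.\ integrating $2\pi\lambda_l D\,\mathrm{d}D$ over the region $(P_l B_l \Psi)^{-1/\varepsilon_l} D \le r$ and taking expectation over $\Psi$. Differentiating this mean measure in $r$ should produce the tier-$l$ contribution $\lambda_l (P_l B_l)^{2/\varepsilon_l}\,\mathbb{E}\Psi_l^{2/\varepsilon_l}\, r^{2/\varepsilon_l - 1}$ (up to the constant $2\pi$, which I would absorb consistently), and this is exactly where the hypothesis $\mathbb{E}\Psi_l^{2/\varepsilon_l}<\infty$ is needed to guarantee a well-defined, locally finite intensity.

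Finally, since the $K$ tiers are independent, I would invoke the Superposition Theorem: the union of the $K$ independent image processes is itself a PPP whose intensity is the sum of the individual intensities, giving $\lambda(r)=\sum_{l=1}^{K}\lambda_l (P_l B_l)^{2/\varepsilon_l}\,\mathbb{E}\Psi_l^{2/\varepsilon_l}\, r^{2/\varepsilon_l-1}$ as claimed. The SINR written in terms of the ordered equivalent distances $\{\tilde{R}_i\}$ of this superposed process is then precisely \eqref{eq:SINRDistributionEq}, and because the transformation is a bijection preserving every received power, the equality holds in distribution.

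I expect the main obstacle to be handling the fading marks rigorously under the Mapping Theorem, in particular confirming that the push-forward of the marked process is genuinely Poisson (not merely that it has the right mean measure) and that the expectation over the arbitrary fading distribution commutes with the radial integration to yield a finite intensity. The finiteness condition $\mathbb{E}\Psi_l^{2/\varepsilon_l}<\infty$ is the delicate hinge here: it is exactly what ensures the equivalent-distance process is locally finite despite the fading being otherwise unrestricted, so I would be careful to isolate where that assumption is invoked rather than treating it as automatic.
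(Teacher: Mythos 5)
Your overall strategy is exactly the paper's: normalize each BS's received power to a pure ``distance,'' invoke the Marking and Mapping theorems tier by tier to get a 1-D inhomogeneous Poisson process, and then superpose the $K$ independent tiers. The finiteness condition $\mathbb{E}\Psi_l^{2/\varepsilon_l}<\infty$ is indeed invoked exactly where you say it is, and your concern about the push-forward being genuinely Poisson is settled by the Mapping theorem you already cite.

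However, your equivalent-distance map is wrong, and the error is not cosmetic: you define $\tilde{R}=(P_lB_l\Psi)^{-1/\varepsilon_l}D$, but under that map the received power is $P_lB_l\Psi D^{-\varepsilon_l}=\tilde{R}^{-\varepsilon_l}$, not $\tilde{R}^{-1}$. The path-loss exponent is therefore \emph{not} normalized to unity, the tiers (which have different $\varepsilon_l$) do not collapse into a common single-tier network, and the SINR does not take the form on the right-hand side of $\left(\ref{eq:SINRDistributionEq}\right)$. The same error propagates into your intensity computation: the region $(P_lB_l\Psi)^{-1/\varepsilon_l}D\le r$ is $D\le r(P_lB_l\Psi)^{1/\varepsilon_l}$, whose mean count is $\lambda_l\pi(P_lB_l)^{2/\varepsilon_l}\mathbb{E}\Psi_l^{2/\varepsilon_l}r^{2}$, so differentiating gives an intensity proportional to $r$, not to $r^{2/\varepsilon_l-1}$ as you assert. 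The correct map — the one the paper uses — is $\tilde{R}=(P_lB_l\Psi)^{-1}D^{\varepsilon_l}$ (the $\varepsilon_l$-th power of yours), which makes the received power exactly $\tilde{R}^{-1}$ and yields the mean count $\lambda_l\pi(P_lB_l)^{2/\varepsilon_l}\mathbb{E}\Psi_l^{2/\varepsilon_l}r^{2/\varepsilon_l}$, hence the stated intensity $\lambda_l\frac{2\pi}{\varepsilon_l}(P_lB_l)^{2/\varepsilon_l}\mathbb{E}\Psi_l^{2/\varepsilon_l}r^{2/\varepsilon_l-1}$ (the $2\pi/\varepsilon_l$ normalization is suppressed in the lemma statement but appears in the appendix). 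With that single correction, the rest of your argument — independent marking, mapping, superposition, and the identification of the serving BS with the nearest point of the superposed process — goes through as in the paper.
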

\begin{proof}
See Appendix \ref{sub:proofStEquivalenceLemma}.
\end{proof}
Now, for the equivalent single-tier network, we characterize the distance
of the nearest $k^{\mathrm{th}}$ tier BS from the MS, $\tilde{R}_{1}^{\left(k\right)}.$
\begin{lem}
\emph{\label{thm:nearestKthTierBSDistance} The tail probability of
$\tilde{R}_{1}^{\left(k\right)}$ is }\begin{eqnarray}
 &  & \mathbb{P}\left(\left\{ \tilde{R}_{1}^{\left(k\right)}>r\right\} \right)\nonumber \\
 &  & =\exp\left(-\lambda_{k}\pi\left(P_{k}B_{k}\right)^{\frac{2}{\varepsilon_{k}}}\mathbb{E}\Psi_{k}^{\frac{2}{\varepsilon_{k}}}r^{\frac{2}{\varepsilon_{k}}}\right),\ \forall\ r\ge0.\label{eq:nearestKthTierBSDistance}\end{eqnarray}
\end{lem}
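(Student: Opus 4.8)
The plan is to prove Lemma~\ref{thm:nearestKthTierBSDistance} directly from the void probability of the tier-$k$ point process inside the equivalent single-tier network of Lemma~\ref{lem:StochasticEqLemma}. The key observation is that the contribution of tier $k$ to the equivalent 1-D process is itself a (non-homogeneous) Poisson point process, obtained by transforming the original tier-$k$ 2-D homogeneous PPP, so that $\tilde{R}_1^{\left(k\right)}$ is simply the location of its nearest point and its tail probability is a void probability. First I would make the underlying transformation explicit: a tier-$k$ BS at physical distance $D$ with fading $\Psi_k$ contributes received power $P_k\Psi_k D^{-\varepsilon_k}B_k$, which in the unity-power, unity-exponent equivalent network corresponds to an equivalent distance $\tilde{R}=D^{\varepsilon_k}/\left(P_kB_k\Psi_k\right)$.

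The main step is to compute the intensity measure $\Lambda_k(r)=\mathbb{E}\big[\#\{\tilde{R}^{\left(k\right)}\le r\}\big]$ of this transformed process on $[0,r]$. Since $\{\tilde{R}\le r\}$ is equivalent to $\{D\le\left(rP_kB_k\Psi_k\right)^{1/\varepsilon_k}\}$, and since the number of tier-$k$ BSs within physical radius $\rho$ of the origin is Poisson with mean $\lambda_k\pi\rho^2$, I would condition on the fading, apply the marking/mapping theorem for Poisson processes (equivalently, a Fubini exchange over the independent fading marks), and average over $\Psi_k$:
\begin{eqnarray*}
\Lambda_k(r) & = & \mathbb{E}_{\Psi_k}\!\left[\lambda_k\pi\left(rP_kB_k\Psi_k\right)^{2/\varepsilon_k}\right]\\
 & = & \lambda_k\pi\left(P_kB_k\right)^{2/\varepsilon_k}\,\mathbb{E}\Psi_k^{2/\varepsilon_k}\,r^{2/\varepsilon_k}.
\end{eqnarray*}
The hypothesis $\mathbb{E}\Psi_k^{2/\varepsilon_k}<\infty$ is exactly what guarantees that this intensity measure is finite for every $r$, so that the transformed process is a well-defined Poisson point process and the interchange of expectation with the counting is legitimate.

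Finally, because the transformed tier-$k$ process is Poisson, the event $\{\tilde{R}_1^{\left(k\right)}>r\}$ is precisely the event that it has no point in $[0,r]$, whose probability is the void probability $\exp\left(-\Lambda_k(r)\right)$. Substituting the expression for $\Lambda_k(r)$ yields~\eqref{eq:nearestKthTierBSDistance}. I expect the only genuine obstacle to be the rigorous justification that the independent fading marks followed by the deterministic map $D\mapsto\tilde{R}$ send the original 2-D homogeneous PPP to a Poisson process with the stated intensity; this is the displacement/marking theorem, and the finiteness condition on $\mathbb{E}\Psi_k^{2/\varepsilon_k}$ is precisely the hypothesis that makes it apply cleanly. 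The remaining evaluation of $\Lambda_k(r)$ is then routine.
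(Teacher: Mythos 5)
Your proposal is correct and follows essentially the same route as the paper: it identifies the tier-$k$ contribution to the equivalent 1-D network as a non-homogeneous Poisson process (via the marking/mapping theorems, exactly as in the proof of Lemma~\ref{lem:StochasticEqLemma}) and reads off the tail probability as the void probability $\exp\left(-\Lambda_k(r)\right)$ on $[0,r]$. Your direct computation of the cumulative intensity $\Lambda_k(r)=\lambda_k\pi\left(P_kB_k\right)^{2/\varepsilon_k}\mathbb{E}\Psi_k^{2/\varepsilon_k}r^{2/\varepsilon_k}$ agrees with integrating the density $\lambda^{\left(k\right)}(t)$ that the paper cites from its earlier proof, so the two arguments coincide in substance.
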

\begin{proof}
See Appendix \ref{sub:proofRkTailProbTheorem}.
\end{proof}
Using the above result, we will now characterize the random variable
$I$, which represents the tier to which the desired BS (i.e., the
BS nearest to the MS in the equivalent single-tier network mentioned
in Lemma \ref{lem:StochasticEqLemma}) belongs.
\begin{lem}
\emph{\label{lem:TierProbabilityLemma}The desired BS belongs to the
$k^{\mathrm{th}}$ tier $\left(k=1,\ 2,\cdots,\ K\right)$ with the
probability \begin{eqnarray}
 &  & \mathbb{P}\left(\left\{ I=k\right\} \right)=\int_{t=0}^{\infty}\lambda_{k}\mathbb{E}\Psi_{k}^{\frac{2}{\varepsilon_{k}}}2\pi t\times\nonumber \\
 &  & \exp\left(-\sum_{l=1}^{K}\lambda_{l}\mathbb{E}\Psi_{l}^{\frac{2}{\varepsilon_{l}}}\pi\left(\frac{P_{l}B_{l}}{P_{k}B_{k}}\right)^{\frac{2}{\varepsilon_{l}}}t^{\frac{2\varepsilon_{k}}{\varepsilon_{l}}}\right)dt.\label{eq:TierProbabilityGeneral}\end{eqnarray}
Further, in the special case $\left\{ \varepsilon_{k}\right\} _{k=1}^{K}=\varepsilon,$
\begin{eqnarray}
\mathbb{P}\left(\left\{ I=k\right\} \right) & = & \frac{\lambda_{k}\mathbb{E}\Psi_{k}^{\frac{2}{\varepsilon}}\left(P_{k}B_{k}\right)^{\frac{2}{\varepsilon}}}{\sum_{m=1}^{K}\lambda_{m}\mathbb{E}\Psi_{m}^{\frac{2}{\varepsilon}}\left(P_{m}B_{m}\right)^{\frac{2}{\varepsilon}}}.\label{eq:TierProbabilitySpecialCase}\end{eqnarray}
}%
{}\end{lem}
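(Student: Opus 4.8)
The plan is to observe that the desired BS lies in tier~$k$ exactly when the nearest tier-$k$ point of the equivalent single-tier network of Lemma~\ref{lem:StochasticEqLemma} is closer to the origin than the nearest point of every other tier; that is, $\{I=k\}=\{\tilde{R}_{1}^{(k)}<\tilde{R}_{1}^{(l)}\ \forall\,l\neq k\}$. The equivalent process is a superposition of $K$ independent non-homogeneous Poisson processes, one per tier, inherited from the mutual independence of the original tiers, so the nearest-point distances $\{\tilde{R}_{1}^{(l)}\}_{l=1}^{K}$ are independent. Computing $\mathbb{P}(\{I=k\})$ is therefore a standard ``which of several independent random variables is the smallest'' calculation, and ties occur with probability zero since the distances are continuously distributed.

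First I would differentiate the tail probability of Lemma~\ref{thm:nearestKthTierBSDistance} to get the density of $\tilde{R}_{1}^{(k)}$. Abbreviating $a_{l}=\lambda_{l}\pi(P_{l}B_{l})^{2/\varepsilon_{l}}\mathbb{E}\Psi_{l}^{2/\varepsilon_{l}}$, the density is $a_{k}\tfrac{2}{\varepsilon_{k}}t^{2/\varepsilon_{k}-1}\exp(-a_{k}t^{2/\varepsilon_{k}})$. Conditioning on $\tilde{R}_{1}^{(k)}=t$ and using independence to factor the survival functions of the remaining tiers gives
\[
\mathbb{P}(\{I=k\})=\int_{0}^{\infty}a_{k}\tfrac{2}{\varepsilon_{k}}t^{2/\varepsilon_{k}-1}\exp\Bigl(-\sum_{l=1}^{K}a_{l}t^{2/\varepsilon_{l}}\Bigr)\,dt,
\]
where the $l$-th factor in the product of survival functions contributes the $l$-th term in the exponent.

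This integral is already \eqref{eq:TierProbabilityGeneral} in disguise, and the remaining step is the change of variables $t=u^{\varepsilon_{k}}/(P_{k}B_{k})$, equivalently $u=(P_{k}B_{k}\,t)^{1/\varepsilon_{k}}$. Under it the $l$-th exponent term becomes $\lambda_{l}\pi\mathbb{E}\Psi_{l}^{2/\varepsilon_{l}}(P_{l}B_{l}/P_{k}B_{k})^{2/\varepsilon_{l}}u^{2\varepsilon_{k}/\varepsilon_{l}}$, which is precisely the exponent claimed, while the prefactor times the Jacobian collapses --- every power of $P_{k}B_{k}$ cancels, the constant $\tfrac{2}{\varepsilon_{k}}$ is absorbed by $dt=\tfrac{\varepsilon_{k}}{P_{k}B_{k}}u^{\varepsilon_{k}-1}du$, and the surviving $t$-power simplifies to $u$ --- leaving exactly $\lambda_{k}\mathbb{E}\Psi_{k}^{2/\varepsilon_{k}}\,2\pi u$. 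I expect this bookkeeping to be the only real obstacle: one must check that a single substitution simultaneously normalizes every tier's exponent to the stated $u^{2\varepsilon_{k}/\varepsilon_{l}}$ form and cleans the prefactor; conceptually nothing beyond the competing-risks argument above is required.

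Finally, for the special case $\{\varepsilon_{k}\}_{k=1}^{K}=\varepsilon$ each exponent $2\varepsilon_{k}/\varepsilon_{l}$ equals $2$, so the integrand is Gaussian in $u$ and the integral reduces to $\int_{0}^{\infty}u\,e^{-Cu^{2}}\,du=1/(2C)$ with $C=\pi\sum_{l=1}^{K}\lambda_{l}\mathbb{E}\Psi_{l}^{2/\varepsilon}(P_{l}B_{l}/P_{k}B_{k})^{2/\varepsilon}$. Evaluating this elementary integral and clearing the common factor $(P_{k}B_{k})^{2/\varepsilon}$ between numerator and denominator yields \eqref{eq:TierProbabilitySpecialCase}.
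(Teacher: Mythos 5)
Your proposal is correct and follows essentially the same route as the paper: you decompose $\{I=k\}$ as the event that $\tilde{R}_1^{(k)}$ is the smallest of the independent nearest-point distances, condition on $\tilde{R}_1^{(k)}$ using the density and survival functions from Lemma~\ref{thm:nearestKthTierBSDistance}, and then reduce the resulting integral to \eqref{eq:TierProbabilityGeneral} by the substitution $u=(P_kB_k\,t)^{1/\varepsilon_k}$, which is exactly the ``simplification'' the paper leaves implicit. The special-case evaluation via the Gaussian-type integral likewise matches the paper's $\int_0^\infty e^{-\alpha t}\,dt=1/\alpha$ step.
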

\begin{proof}
See Appendix \ref{sub:proofDesiredBSTierProbLemma}.
\end{proof}
Although we do not have a closed-form expression, $\left(\ref{eq:TierProbabilityGeneral}\right)$
can be computed easily to any desired accuracy by numerical integration.
Next, we obtain the p.d.f. of the distance of the serving BS from
MS, given that it belongs to the $k^{\mathrm{th}}$ tier, denoted
by $f_{\left.\tilde{R}_{1}\right|I}\left(\left.r\right|k\right).$
\begin{lem}
\emph{\label{lem:pdfServingBSGivenTier}The p.d.f. of the distance
of the serving BS from the MS, given it belongs to the $k^{\mathrm{th}}$
tier }$\left(k=1,\ 2,\cdots,\ K\right),$\emph{ is} \begin{eqnarray}
 &  & f_{\left.\tilde{R}_{1}\right|I}\left(\left.r\right|k\right)=\frac{\lambda_{k}\frac{2\pi}{\varepsilon_{k}}\left(P_{k}B_{k}\right)^{\frac{2}{\varepsilon_{k}}}\mathbb{E}\Psi_{k}^{\frac{2}{\varepsilon_{k}}}r^{\frac{2}{\varepsilon_{k}}-1}}{\mathbb{P}\left(\left\{ I=k\right\} \right)}\nonumber \\
 &  & \times\exp\left(-\sum_{l=1}^{K}\lambda_{l}\pi\left(P_{l}B_{l}\right)^{\frac{2}{\varepsilon_{l}}}\mathbb{E}\Psi_{l}^{\frac{2}{\varepsilon_{l}}}r^{\frac{2}{\varepsilon_{l}}}\right),\ r\ge0,\label{eq:pdfServingBSgivenTier}\end{eqnarray}
\emph{where the denominator is obtained using Lemma \ref{lem:TierProbabilityLemma}.}\end{lem}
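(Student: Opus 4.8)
The plan is to obtain the desired conditional density as a ratio, writing $f_{\tilde{R}_1|I}(r|k)=g_k(r)/\mathbb{P}(\{I=k\})$, where $g_k(r)$ is the joint density of the event that the serving BS lies in the $k^{\mathrm{th}}$ tier and sits at distance $r$, and the denominator is supplied directly by Lemma \ref{lem:TierProbabilityLemma}. In the equivalent single-tier network of Lemma \ref{lem:StochasticEqLemma} every BS has unit power, gain and exponent, so the received power from a BS at equivalent distance $\tilde{R}$ is $\tilde{R}^{-1}$; hence the maximum-power (serving) BS is simply the nearest one, at distance $\tilde{R}_1=\min_l\tilde{R}_1^{(l)}$, and $\{I=k\}$ is precisely the event that this minimum is attained by a $k^{\mathrm{th}}$ tier BS.

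To compute $g_k(r)$ I would condition on the geometry tier by tier. The event $\{I=k,\ \tilde{R}_1\in[r,r+dr)\}$ occurs exactly when the nearest $k^{\mathrm{th}}$ tier BS falls in $[r,r+dr)$ while every other tier has its nearest BS beyond $r$, i.e. $\tilde{R}_1^{(l)}>r$ for all $l\neq k$. Because the per-tier point processes in the equivalent network are mutually independent (they arise from the independent tiers of the original model), the joint density factorizes as
\begin{equation*}
g_k(r)=f_{\tilde{R}_1^{(k)}}(r)\prod_{l\neq k}\mathbb{P}\left(\left\{\tilde{R}_1^{(l)}>r\right\}\right).
\end{equation*}

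Both ingredients are read off from Lemma \ref{thm:nearestKthTierBSDistance}. Differentiating the tail probability $\mathbb{P}(\{\tilde{R}_1^{(k)}>r\})$ gives
\begin{equation*}
f_{\tilde{R}_1^{(k)}}(r)=\lambda_k\frac{2\pi}{\varepsilon_k}\left(P_kB_k\right)^{\frac{2}{\varepsilon_k}}\mathbb{E}\Psi_k^{\frac{2}{\varepsilon_k}}r^{\frac{2}{\varepsilon_k}-1}\exp\left(-\lambda_k\pi\left(P_kB_k\right)^{\frac{2}{\varepsilon_k}}\mathbb{E}\Psi_k^{\frac{2}{\varepsilon_k}}r^{\frac{2}{\varepsilon_k}}\right),
\end{equation*}
which already produces the $\frac{2\pi}{\varepsilon_k}$ prefactor appearing in the numerator of $\left(\ref{eq:pdfServingBSgivenTier}\right)$. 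Substituting the tail probabilities of the remaining tiers from the same lemma and collecting exponentials merges the $k^{\mathrm{th}}$ tier exponential with $\prod_{l\neq k}$ into the single factor $\exp(-\sum_{l=1}^K\lambda_l\pi(P_lB_l)^{2/\varepsilon_l}\mathbb{E}\Psi_l^{2/\varepsilon_l}r^{2/\varepsilon_l})$. Dividing $g_k(r)$ by $\mathbb{P}(\{I=k\})$ then yields exactly $\left(\ref{eq:pdfServingBSgivenTier}\right)$.

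The only delicate point is the bookkeeping for a density in the continuous variable $\tilde{R}_1$ jointly with the discrete label $I$, together with the justification of the product factorization; both rest on the independence of the tiers and the void-probability form of Lemma \ref{thm:nearestKthTierBSDistance}. As a consistency check I would verify that $\int_0^\infty g_k(r)\,dr=\mathbb{P}(\{I=k\})$, so that $f_{\tilde{R}_1|I}(\cdot|k)$ integrates to one: the substitution $r\mapsto t$ defined by $r^{2/\varepsilon_k}=(P_kB_k)^{-2/\varepsilon_k}t^2$ sends the exponent to $\sum_{l}\lambda_l\mathbb{E}\Psi_l^{2/\varepsilon_l}\pi(P_lB_l/P_kB_k)^{2/\varepsilon_l}t^{2\varepsilon_k/\varepsilon_l}$ and the prefactor times $dr$ to $\lambda_k\mathbb{E}\Psi_k^{2/\varepsilon_k}2\pi t\,dt$, reproducing precisely the integral in $\left(\ref{eq:TierProbabilityGeneral}\right)$.
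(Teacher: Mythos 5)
Your proposal is correct and follows essentially the same route as the paper: the paper computes the joint tail probability $\mathbb{P}\left(\left\{ \tilde{R}_{1}>r,\ I=k\right\} \right)$ as $\int_{r}^{\infty}f_{\tilde{R}_{1}^{\left(k\right)}}\left(t\right)\prod_{l\ne k}\mathbb{P}\left(\left\{ \tilde{R}_{1}^{\left(l\right)}>t\right\} \right)dt$ using the independence of the tiers and Lemma \ref{thm:nearestKthTierBSDistance}, divides by $\mathbb{P}\left(\left\{ I=k\right\} \right)$, and differentiates, which is exactly your joint-density factorization written at the level of tail probabilities. The only cosmetic difference is that you work directly with the density $g_{k}\left(r\right)$ rather than differentiating at the end, and your normalization check reproducing $\left(\ref{eq:TierProbabilityGeneral}\right)$ is a nice addition not present in the paper.
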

\begin{proof}
See Appendix \ref{sub:proofPDFRkGivenTier}.
\end{proof}
Now, we are ready to obtain the expression for the coverage probability
for a typical MS in this heterogeneous network.

\section{Coverage Probability}

Recall that the MS is covered only if the SINR exceeds a certain threshold,
$\beta_{k},$ where $k$ is the tier to which the desired BS belongs.
Using the results from the previous section, we now present the coverage
probability for an MS in a heterogeneous cellular network.
\begin{thm}
\label{thm:coverageProbabilityGeneral}\emph{The coverage probability
of the MS is \begin{eqnarray}
 &  & \mathbb{P}_{coverage}^{open-access}=\nonumber \\
 &  & \sum_{k=1}^{K}\lambda_{k}\mathbb{E}\Psi_{k}^{\frac{2}{\varepsilon_{k}}}\cdot\int_{\omega=-\infty}^{\infty}\int_{t=0}^{\infty}\mathrm{e}^{\frac{i\omega\eta\left(t\left/\pi\right.\right){}^{\frac{\varepsilon_{k}}{2}}}{P_{k}B_{k}}}\nonumber \\
 &  & \mathrm{e}^{-\sum_{l=1}^{K}\lambda_{l}\alpha_{l}\mathbb{E}\Psi_{l}^{\frac{2}{\varepsilon_{l}}}\left(\frac{P_{l}B_{l}}{P_{k}B_{k}}\right)^{\frac{2}{\varepsilon_{l}}}t^{\frac{\varepsilon_{k}}{\varepsilon_{l}}}}dt\cdot\left(\frac{1-\mathrm{e}^{\frac{-i\omega}{\beta_{k}}}}{2\pi i\omega}\right)d\omega,\label{eq:coverageProbabilityGeneral}\end{eqnarray}
where $\alpha_{l}=\pi^{1-\frac{\varepsilon_{k}}{\varepsilon_{l}}}\cdot_{1}F_{1}\left(\frac{-2}{\varepsilon_{l}};1-\frac{2}{\varepsilon_{l}};i\omega\right).$}\end{thm}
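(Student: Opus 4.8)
The plan is to decompose the coverage probability by the serving tier and then push everything through the single-tier equivalence of Lemma~\ref{lem:StochasticEqLemma}. Writing the coverage event as $\bigcup_{k}(\{SINR>\beta_{k}\}\cap\{I=k\})$, I would start from
\[
\mathbb{P}_{coverage}^{open-access}=\sum_{k=1}^{K}\mathbb{P}\left(\{SINR>\beta_{k}\}\cap\{I=k\}\right),
\]
and condition on the serving-BS distance $\tilde{R}_{1}=r$, so that the joint law of $(\tilde{R}_{1},I)$ contributes the weight $f_{\tilde{R}_{1}|I}(r|k)\,\mathbb{P}(\{I=k\})$; by Lemma~\ref{lem:pdfServingBSGivenTier} this product is free of the $\mathbb{P}(\{I=k\})$ normalizer and carries the void-probability factor $\exp(-\int_{0}^{r}\lambda(s)\,ds)$. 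In the equivalent single-tier picture the event $SINR>\beta_{k}$ reads $(X+\eta)r<1/\beta_{k}$, where $X=\sum_{i\ge2}\tilde{R}_{i}^{-1}$ is the normalized aggregate interference when the nearest point sits at $r$.

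Next I would represent the indicator of $\{0\le U<1/\beta_{k}\}$, with $U=(X+\eta)r$, through its Fourier kernel, using $\mathbb{P}(0\le U<a)=\int_{-\infty}^{\infty}\mathbb{E}[e^{i\omega U}]\,\frac{1-e^{-i\omega a}}{2\pi i\omega}\,d\omega$ with $a=1/\beta_{k}$; this produces exactly the factor $\frac{1-e^{-i\omega/\beta_{k}}}{2\pi i\omega}$ in the statement. Since $\eta$ and $r$ are deterministic under the conditioning, $\mathbb{E}[e^{i\omega U}]=e^{i\omega\eta r}\,\mathbb{E}[e^{i\omega rX}]$, and I would evaluate the interference transform by the probability generating functional of the conditioned process. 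By the restriction property of the Poisson process, conditioning on the nearest point at $r$ leaves an inhomogeneous Poisson process on $(r,\infty)$ with the intensity of Lemma~\ref{lem:StochasticEqLemma}, giving
\[
\mathbb{E}[e^{i\omega rX}]=\exp\left(-\int_{r}^{\infty}\bigl(1-e^{i\omega r/s}\bigr)\lambda(s)\,ds\right).
\]

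The crux is to fuse this PGFL exponent with the void-probability exponent $-\int_{0}^{r}\lambda(s)\,ds$ and recognize a confluent hypergeometric function. Splitting $\lambda$ tier by tier and substituting $v=r/s$, the combined per-tier exponent becomes $-a_{l}r^{2/\varepsilon_{l}}\bigl[1+\tfrac{2}{\varepsilon_{l}}\int_{0}^{1}(1-e^{i\omega v})v^{-2/\varepsilon_{l}-1}\,dv\bigr]$ with $a_{l}=\lambda_{l}\pi(P_{l}B_{l})^{2/\varepsilon_{l}}\mathbb{E}\Psi_{l}^{2/\varepsilon_{l}}$; expanding in powers of $i\omega$ and telescoping the Pochhammer ratio $(-2/\varepsilon_{l})_{n}/(1-2/\varepsilon_{l})_{n}=\tfrac{-2/\varepsilon_{l}}{n-2/\varepsilon_{l}}$ identifies the bracket with ${}_{1}F_{1}(-2/\varepsilon_{l};1-2/\varepsilon_{l};i\omega)$. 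I expect this identification---together with checking that the difference integral converges at $v=0$, which holds because $\varepsilon_{l}>2$---to be the main obstacle; the remainder is bookkeeping. Finally I would interchange the $r$- and $\omega$-integrals (Fubini, justified since the bounded transform is integrated against the absolutely integrable kernel) and apply the change of variable $t=\pi(P_{k}B_{k})^{2/\varepsilon_{k}}r^{2/\varepsilon_{k}}$, i.e. $r=(t/\pi)^{\varepsilon_{k}/2}/(P_{k}B_{k})$. This sends the density prefactor times $dr$ to $\lambda_{k}\mathbb{E}\Psi_{k}^{2/\varepsilon_{k}}\,dt$, turns $e^{i\omega\eta r}$ into the stated noise factor, and converts the fused exponent into $-\sum_{l}\lambda_{l}\alpha_{l}\mathbb{E}\Psi_{l}^{2/\varepsilon_{l}}(P_{l}B_{l}/P_{k}B_{k})^{2/\varepsilon_{l}}t^{\varepsilon_{k}/\varepsilon_{l}}$ with $\alpha_{l}=\pi^{1-\varepsilon_{k}/\varepsilon_{l}}\,{}_{1}F_{1}(-2/\varepsilon_{l};1-2/\varepsilon_{l};i\omega)$, which is precisely $(\ref{eq:coverageProbabilityGeneral})$.
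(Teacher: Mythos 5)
Your proposal is correct and follows essentially the same route as the paper's proof: decompose by serving tier, condition on the serving-BS distance via Lemmas \ref{lem:TierProbabilityLemma} and \ref{lem:pdfServingBSGivenTier}, Fourier-invert the indicator of the coverage event, evaluate the conditional characteristic function of the aggregate interference, and change variables to $t$. The only difference is one of self-containedness: where the paper invokes a prior result for the characteristic-function representation and the resulting ${}_{1}F_{1}$ expression, you derive both directly from the Poisson probability generating functional and the Pochhammer telescoping identity, which checks out.
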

\begin{proof}
See Appendix \ref{sub:ProofCovProbTheorem}.
\end{proof}
Though the above expression is not in closed form, this can also be
computed to any desired accuracy using numerical integration methods
to compute the double integral.%
{} Moreover, several insightful results arise for certain special cases
of the above result. For example, in the interference-limited scenario,
where the effect of the background noise may be ignored in the presence
of the strong interferences from the BSs, the above result reduces
to a simple form as shown below. 
\begin{cor}
\emph{\label{cor:samePLexponentCase}In the interference-limited case,
with $\left\{ \varepsilon_{k}\right\} _{k=1}^{K}=\varepsilon,$ the
coverage probability is} \begin{eqnarray*}
\mathbb{P}_{coverage}^{open-access} & = & \sum_{k=1}^{K}\frac{\lambda_{k}\mathbb{E}\Psi_{k}^{\frac{2}{\varepsilon}}\left(P_{k}B_{k}\right)^{\frac{2}{\varepsilon}}\gamma_{k}}{\sum_{l=1}^{K}\lambda_{l}\mathbb{E}\Psi_{l}^{\frac{2}{\varepsilon}}\left(P_{l}B_{l}\right)^{\frac{2}{\varepsilon}}},\end{eqnarray*}
\emph{where $\gamma_{k}=\int_{\omega=-\infty}^{\infty}\frac{\left(1-\mathrm{e}^{-\frac{i\omega}{\beta_{k}}}\right)}{2\pi i\omega\cdot_{1}F_{1}\left(-\frac{2}{\varepsilon};1-\frac{2}{\varepsilon};i\omega\right)}d\omega,$
and this integration reduces to $\gamma_{k}=\frac{\mathrm{sin}\left(\left.2\pi\right/\varepsilon\right)}{\left.2\pi\right/\varepsilon}\cdot\beta_{k}^{-\frac{2}{\varepsilon}},$
if }\textbf{\emph{$\beta_{k}\ge1.$}}
\end{cor}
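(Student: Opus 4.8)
The plan is to obtain the corollary in two stages: first specialize Theorem~\ref{thm:coverageProbabilityGeneral} to the interference-limited, common-exponent regime so as to isolate the scalar $\gamma_k$, and then evaluate that integral in closed form for $\beta_k\ge1$. For the specialization I set $\eta\to0$, so the noise factor $\mathrm{e}^{i\omega\eta(t/\pi)^{\varepsilon_k/2}/(P_kB_k)}$ becomes $1$, and put $\varepsilon_k=\varepsilon$ for all $k$. Writing $a=2/\varepsilon\in(0,1)$, every ratio $\varepsilon_k/\varepsilon_l$ equals $1$, so the prefactor $\pi^{1-\varepsilon_k/\varepsilon_l}$ in $\alpha_l$ becomes $\pi^{0}=1$ and $\alpha_l=\,_{1}F_{1}(-a;1-a;i\omega)$ no longer depends on $l$; I denote this common value $A(\omega)$, while the power $t^{\varepsilon_k/\varepsilon_l}$ collapses to $t$. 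The inner $t$-integral is then the elementary $\int_0^\infty \mathrm{e}^{-A(\omega)\,S\,t/(P_kB_k)^{a}}\,dt=(P_kB_k)^{a}/(A(\omega)\,S)$, where $S:=\sum_l\lambda_l\mathbb{E}\Psi_l^{a}(P_lB_l)^{a}$; convergence requires $\mathrm{Re}\,A(\omega)>0$, which holds (indeed $\mathrm{Re}\,A(\omega)\ge1$, as the integral representation below shows). Multiplying by the prefactor $\lambda_k\mathbb{E}\Psi_k^{a}$ produces the weight $\lambda_k\mathbb{E}\Psi_k^{a}(P_kB_k)^{a}/S$, which is exactly the $\mathbb{P}(\{I=k\})$ of Lemma~\ref{lem:TierProbabilityLemma}, and leaves the $\omega$-integral $\gamma_k=\int_{-\infty}^\infty (1-\mathrm{e}^{-i\omega/\beta_k})/[2\pi i\omega\,A(\omega)]\,d\omega$, reproducing $\mathbb{P}_{coverage}^{open-access}=\sum_k \mathbb{P}(\{I=k\})\gamma_k$.

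The heart of the proof is evaluating $\gamma_k$, and I would first show that $1/A(\omega)$ is a characteristic function. In the equivalent single-tier network of Lemma~\ref{lem:StochasticEqLemma}, the common exponent makes the intensity the pure power law $\lambda(r)=S\,r^{a-1}$, so the void probability gives $\tilde R_1$ the density $S r^{a-1}\mathrm{e}^{-(S/a)r^a}$, and conditioning on $\tilde R_1=r$ leaves the remaining points a Poisson process on $(r,\infty)$. Computing the characteristic function of $W:=\sum_{i\ge2}\tilde R_1/\tilde R_i=1/\mathrm{SIR}$ through the Poisson characteristic functional and integrating out $\tilde R_1$ yields $\mathbb{E}[\mathrm{e}^{i\omega W}]=1/(1+a\int_0^1(1-\mathrm{e}^{i\omega u})u^{-a-1}du)$; expanding the integral in its power series identifies the denominator with $A(\omega)=1-a\sum_{n\ge1}(i\omega)^n/[n!(n-a)]$, so $\mathbb{E}[\mathrm{e}^{i\omega W}]=1/A(\omega)$. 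Since the kernel $(1-\mathrm{e}^{-i\omega/\beta_k})/(2\pi i\omega)=\tfrac1{2\pi}\int_0^{1/\beta_k}\mathrm{e}^{-i\omega x}\,dx$, Fourier inversion (Fubini) lets me read $\gamma_k=\mathbb{P}(0<W<1/\beta_k)=\mathbb{P}(\mathrm{SIR}>\beta_k)$ in this pooled network.

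Now the restriction $\beta_k\ge1$ enters decisively. When $\beta_k\ge1$ the event $\{\mathrm{SIR}>\beta_k\}$ can hold for at most one base station, since its received power would have to exceed $\beta_k/(1+\beta_k)\ge\tfrac12$ of the total; the per-base-station coverage events are therefore mutually exclusive, and $\gamma_k=\mathbb{P}(\mathrm{SIR}>\beta_k)$ equals the expected number of base stations whose individual SIR exceeds $\beta_k$. Mapping $x_i=\tilde R_i^{a}$ turns the power-law process into a homogeneous one of some rate $\mu$, and the Slivnyak--Mecke formula reduces this expected count to $\mu\int_0^\infty \mathbb{P}(\tilde T<x^{-1/a}/\beta_k)\,dx=\mu\,\beta_k^{-a}\,\mathbb{E}[\tilde T^{-a}]$ after a change of variables and an integration by parts, where $\tilde T$ is the aggregate interference of an independent copy. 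Using $\tilde T^{-a}=\Gamma(a)^{-1}\int_0^\infty s^{a-1}\mathrm{e}^{-s\tilde T}\,ds$ together with the stable Laplace transform $\mathbb{E}[\mathrm{e}^{-s\tilde T}]=\mathrm{e}^{-\mu\Gamma(1-a)s^{a}}$ gives $\mathbb{E}[\tilde T^{-a}]=[a\mu\,\Gamma(a)\Gamma(1-a)]^{-1}$; the factors $\mu$ cancel, and the reflection formula $\Gamma(a)\Gamma(1-a)=\pi/\sin(\pi a)$ collapses everything to $\gamma_k=\frac{\sin(\pi a)}{\pi a}\beta_k^{-a}=\frac{\sin(2\pi/\varepsilon)}{2\pi/\varepsilon}\beta_k^{-2/\varepsilon}$, as claimed.

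I expect the main obstacle to be the evaluation step, in two respects. First, justifying the interchange that turns the Fourier integral into $\mathbb{P}(\mathrm{SIR}>\beta_k)$ requires controlling $1/A(\omega)$ both at large $|\omega|$ (decay of the characteristic function) and near $\omega=0$; I anticipate the series and asymptotics of $\,_{1}F_{1}$ will supply the needed integrability, and the bound $\mathrm{Re}\,A(\omega)=1+a\int_0^1(1-\cos\omega u)u^{-a-1}du\ge1$ already rules out vanishing denominators on the real axis. Second, and more essentially, the closed form holds only because $\beta_k\ge1$ forces mutual exclusivity, which is what permits replacing the probability by an expected count that Mecke's formula evaluates cleanly; for $\beta_k<1$ this count strictly overcounts and the simplification fails, explaining precisely why the sinc identity is stated only for $\beta_k\ge1$. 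A purely analytic alternative, evaluating $\gamma_k$ by contour integration of $1/\,_{1}F_{1}(-a;1-a;i\omega)$, is conceivable but would force me to confront the branch and zero structure of the confluent hypergeometric function, which I expect to be considerably more delicate than the probabilistic route above.
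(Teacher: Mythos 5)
Your proposal is correct, and its first stage is exactly the paper's (sketched) argument: with $\eta=0$ and $\varepsilon_{k}\equiv\varepsilon$ the $t$-integral in Theorem~\ref{thm:coverageProbabilityGeneral} collapses to $\bigl(\sum_{l}\lambda_{l}\mathbb{E}\Psi_{l}^{2/\varepsilon}\left(P_{l}B_{l}/P_{k}B_{k}\right)^{2/\varepsilon}\bigr)^{-1}\big/\,_{1}F_{1}\!\left(-\tfrac{2}{\varepsilon};1-\tfrac{2}{\varepsilon};i\omega\right)$, which yields the tier weights $\mathbb{P}\left(\left\{ I=k\right\}\right)$ of Lemma~\ref{lem:TierProbabilityLemma} and isolates $\gamma_{k}$. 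Where you genuinely diverge is the closed form for $\beta_{k}\ge1$: the paper explicitly declines to prove it, instead recognizing $\gamma_{k}$ as a single-tier coverage probability and importing the sinc formula from \cite{Dhillon2011} (exponential fading, $\eta=0$) together with the fading/power invariance of the single-tier SINR distribution from \cite[Remark 4]{Madhusudhanan2010a}. You replace these citations with a self-contained derivation: your series identity $A(\omega)=1-a\sum_{n\ge1}(i\omega)^{n}/[n!(n-a)]$ with $a=2/\varepsilon$ is correct and does identify $1/\,_{1}F_{1}$ as the characteristic function of $W=1/\mathrm{SIR}$ in the pooled network; the inversion then gives $\gamma_{k}=\mathbb{P}(\mathrm{SIR}>\beta_{k})$; and the $\beta_{k}\ge1$ exclusivity argument (a BS with $\mathrm{SIR}>1$ must carry more than half the total power, so at most one such BS exists and the probability equals an expected count) combined with Slivnyak--Mecke, the stable Laplace transform $\mathbb{E}[\mathrm{e}^{-s\tilde{T}}]=\mathrm{e}^{-\mu\Gamma(1-a)s^{a}}$, and the reflection formula, correctly produces $\frac{\sin(\pi a)}{\pi a}\beta_{k}^{-a}$ with all $\mu$-dependence cancelling. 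Your route buys an actual proof where the paper has only pointers, and it makes the role of the hypothesis $\beta_{k}\ge1$ transparent, which the paper leaves implicit. Two points to tighten: the pooled 1-D intensity should carry the factor $\frac{2\pi}{\varepsilon}$ appearing in the appendix proof of Lemma~\ref{lem:StochasticEqLemma} (the lemma statement drops it), though this is harmless since your final computation is scale-invariant; and because $1/A(\omega)\sim C|\omega|^{-a}$ is not absolutely integrable for $a<1$, the step $\gamma_{k}=\mathbb{P}(0<W<1/\beta_{k})$ should be justified by L\'evy's inversion theorem as a principal-value limit rather than by density inversion plus Fubini --- the combined integrand decays like $|\omega|^{-1-a}$, so $\gamma_{k}$ itself converges absolutely, and $W$ has no atoms at $0$ or $1/\beta_{k}$, so the inversion gives exactly the stated probability.
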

The detailed proof will not be given. However, the corollary can be
easily proved by noting that the integration w.r.t. $t$ in $\left(\ref{eq:coverageProbabilityGeneral}\right)$
evaluates to $\frac{\left(\sum_{l=1}^{K}\lambda_{l}\mathbb{E}\Psi_{l}^{\frac{2}{\varepsilon}}\left(\frac{P_{l}B_{l}}{P_{k}B_{k}}\right)^{\frac{2}{\varepsilon}}\right)^{-1}}{_{1}F_{1}\left(-\frac{2}{\varepsilon};1-\frac{2}{\varepsilon};i\omega\right)}.$
Notice that $\gamma_{k}$ does not depend on any of the parameters
that define the characteristics of the BSs of the various tiers, and
only depends on the path-loss exponent, and the SINR thresholds of
the various tiers. Further, $\gamma_{k}$ is the coverage probability
of a MS in a single-tier network. For the $\eta=0$ case, \cite[Eq. 3]{Dhillon2011}
gives the coverage probability of a single-tier network with SINR
threshold, $\gamma\ \left(\ge1\right),$ and channel gains that are
i.i.d. exponentially distributed to be $\frac{\sin\left(2\pi/\varepsilon\right)}{\left(2\pi/\varepsilon\right)}\cdot\gamma^{-\frac{2}{\varepsilon}}.$
For the same case, in \cite[Remark 4]{Madhusudhanan2010a}, we have
shown that the single-tier network $SINR$ distribution is the same
irrespective of the transmission power of BSs and the distribution
of the channel gains. As a result, the above expression that holds
for exponential fading distribution also holds for any other fading
distribution (even no fading). 

The contributions of the BS density, transmission power, fading distribution,
and the bias factor of the $k^{\mathrm{th}}$ tier BS, are all captured
as the multiplicative factors of the $k^{\mathrm{th}}$ tier SINR
threshold, $\gamma_{k}.$ Further, the contribution of the fading
coefficient is completely captured in terms of the $\frac{2}{\varepsilon_{k}}^{\mathrm{th}}$
moment of the random variable $\Psi_{k},$ used to represent the shadow
fading factor of the $k^{\mathrm{th}}$ tier. Next, having studied
the coverage probability of the MS in the heterogeneous cellular network,
we present some numerical examples to illustrate them.

\section{\label{sec:NumericalExamplesAndDiscussion}Numerical Examples and
Discussion}

In this section, we study various scenarios in order to clearly illustrate
the results we have obtained. We restrict ourselves to a two-tier
network consisting of a macrocell and picocell network for simplicity,
and assume that the background noise is zero. We note that these studies
can be extended to arbitrary number of tiers. Further, please refer
Appendix \ref{sub:Simulation-Method} for the algorithm used to perform
the Monte-Carlo simulations. For all the cases that will be considered
next, we assume $\lambda_{1}=0.001,$ $\lambda_{2}=.002,$ $P_{1}=53\ \mathrm{dBm},$
$P_{2}=33\ \mathrm{dBm,}$ $B_{1}=1,$ $B_{2}=1,$ $\Psi_{1}$ and
$\Psi_{2}$ are both exponential random variables with mean 1, $\varepsilon_{1}=3.8,$
$\varepsilon_{2}=3.5,$ \textbf{$\beta_{1}=\beta_{2}=0\ \mathrm{dB},$}
unless specified other wise. %
\begin{figure}
\begin{centering}
\vspace{-0.1in}
\includegraphics[clip,scale=0.6]{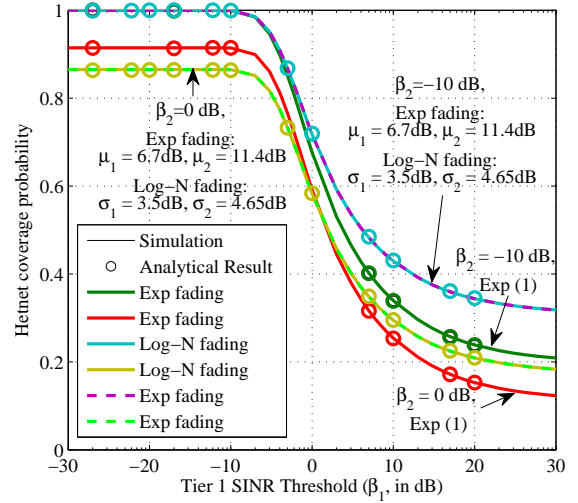}
\par\end{centering}

\caption{\label{fig:covProbVSTier1SINRTh}2-Tier network, Coverage Probability
vs Tier 1 SINR threshold}
\vspace{-0.1in}

\end{figure}
\begin{figure}
\begin{centering}
\vspace{-0.1in}
\includegraphics[clip,scale=0.6]{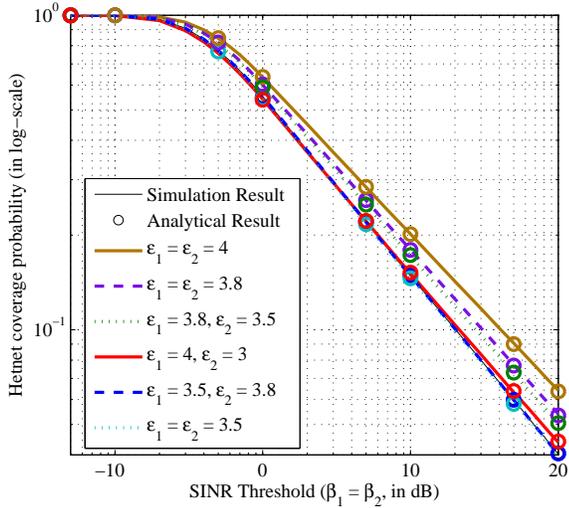}
\par\end{centering}

\caption{\label{fig:2TierNwDiffPLConfigs}2-Tier network, Coverage Probability
vs SINR threshold for various combinations of path-loss exponents}
\vspace{-0.1in}

\end{figure}
\begin{figure}
\begin{centering}
\vspace{-0.1in}
\includegraphics[clip,scale=0.6]{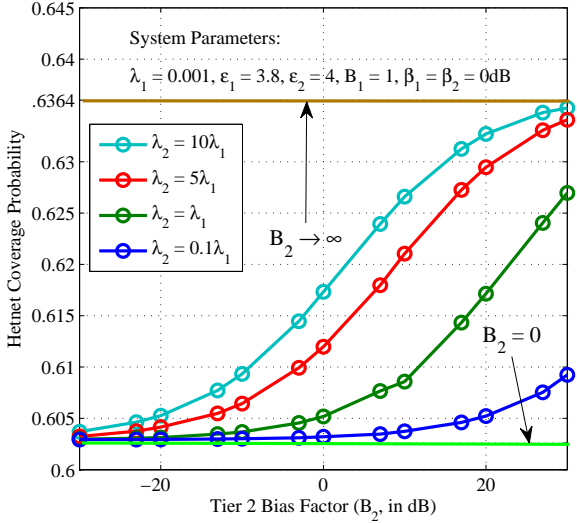}
\par\end{centering}

\caption{\label{fig:2TierNWCovProbVSTier2Bias}2-Tier network, Coverage probability
vs Tier 2 Bias Factor}
\vspace{-0.1in}

\end{figure}

In our first example, we dwell in detail into our characterization
of the SINR distribution as well as the coverage probability for arbitrary
fading distribution that are different for different tiers. Figure
\ref{fig:covProbVSTier1SINRTh} shows the plot of coverage probability
versus the SINR threshold for various choices of fading distributions
for each of the tiers. The first two curves (in the legend of Figure
\ref{fig:covProbVSTier1SINRTh}) show the two-tier network coverage
probability when the channel gains at both the tiers are exponential
random variables with mean 1. In the next two curves in Figure \ref{fig:covProbVSTier1SINRTh},
we depict two scenarios where the channel gains are log-normal random
variables with zero mean and standard deviations $\sigma_{1}=3.5\ \mathrm{dB}$
for the first tier and $\sigma_{2}=4.65\ \mathrm{dB}$ for the second
tier. The characterization of the coverage probability if $\beta_{1}\ne\beta_{2}$
and if \textbf{$\beta_{1},\ \beta_{2}\ngeq0\ \mathrm{dB}$ }were not
known until now, for any chosen distribution of the channel gains,
not even the exponential distribution.\textbf{ }The results in this
paper, in particular, Theorem \ref{thm:coverageProbabilityGeneral}
gives the coverage probability for all values of $\beta_{1}$ and
$\beta_{2}.$ The last two curves in Figure \ref{fig:covProbVSTier1SINRTh}
consider the channel gain to have exponential distributions with means
$\mu_{1}=46.5\ \mathrm{dB}$ for the first tier and $\mu_{2}=50.3\ \mathrm{dB}$
for the second tier. Notice that their coverage probability curves
match exactly with those for the log-normal distributions. This is
because, as illustrated in Theorem \ref{thm:coverageProbabilityGeneral},
the coverage probability only depends on the $\frac{2}{\varepsilon_{k}}^{\mathrm{th}}$
moment of the random fading factor of the $k^{\mathrm{th}}$ tier,
i.e. $\mathbb{E}\Psi_{k}^{\frac{2}{\varepsilon_{k}}}.$ If $\Psi_{k}$
is a log-normal random variable with standard deviation $\sigma_{k},$
then $\mathbb{E}\Psi_{k}^{\frac{2}{\varepsilon}}=\exp\left(\frac{2\sigma_{k}^{2}}{\varepsilon_{k}^{2}}\right),$
and if $\Psi_{k}$ is an exponential random variable with mean $\mu_{k},$
the $\mathbb{E}\Psi_{k}^{\frac{2}{\varepsilon_{k}}}=\mu_{k}^{\frac{2}{\varepsilon_{k}}}\Gamma\left(1+\frac{2}{\varepsilon_{k}}\right).$
In this example, we have chosen $\sigma_{1},\ \sigma_{2},\ \mu_{1},$
and $\mu_{2}$ in such a way that $\mathbb{E}\Psi_{k}^{\frac{2}{\varepsilon_{k}}}$'s
are the same for $k=1,\ 2.$

In the next example, we show that we are now able to study the heterogeneous
network for different path-loss exponents at different tiers. Notice
that there is an improvement when the path-loss exponents are large.
This is because the signal power decays faster with the distance,
thereby causing lesser intercell interference. Notice from Corollary
\ref{cor:samePLexponentCase} that, when the path-loss exponents and
the SINR thresholds are identical across the tiers, then the coverage
probability is the same as that of a single-tier network with the
same path-loss exponent and the SINR threshold, and further, coverage
probability varies log-linearly with the SINR threshold (> 0 dB).
In Figure \ref{fig:2TierNwDiffPLConfigs}, notice that even when the
path-loss exponents are not identical across the tiers, the coverage
probability still has a log-linear behavior for SINR thresholds greater
than 0 dB.

In the last example, we study the effect of varying the BS density
$\left(\lambda_{2}\right)$ and the bias factor $\left(B_{2}\right)$
of the second tier on the coverage probability, where the second tier
has a greater path-loss exponent, to mimick a typical indoor environment
situation. While Corollary \ref{cor:samePLexponentCase} shows that
the coverage probability does not depend on the BS density of the
tiers as well as the bias factors of the tiers when all the tiers
have the same path-loss exponent and the SINR thresholds, when the
path-loss exponents are different, Figure \ref{fig:2TierNWCovProbVSTier2Bias}
shows that the coverage probability actually improves as we increase
the bias factor of the second tier, and further it increases at a
faster rate as the density of the second tier is increased. At the
limits of the tier two bias factor, i.e. \textbf{$B_{2}\rightarrow0,$
}and $B_{2}\rightarrow\infty,$ the two-tier network essentially collapses
to a single-tier network consisting of only the macrocell network,
and the femtocell network, respectively. As mentioned previously,
the single tier network is invariant to changes in the BS density
and the bias factor, and as a result the curves are straight lines
at these limits. 

The mathematical tools developed in this paper to study the heterogeneous
network coverage probability are sufficient to characterize the average
ergodic rate achieved at the MS, throughput and the per-tier traffic
load, which are other important metrics of interest for understanding
the heterogeneous network, and these will be considered in detail
elsewhere.

\section{Conclusions}

In this paper, we study a heterogeneous cellular network consisting
of $K$ tiers, where each tier has its own BS density, BS transmission
power and bias factor, path-loss exponent and channel gain with an
arbitrary distribution, that is different for different tiers. For
such a general model for the heterogeneous network, we develop mathematical
tools based on stochastic geometry to characterize the distribution
of the downlink SINR and the coverage probability at any given MS,
where the MS associates itself with the BS that has the SINR at the
MS. Moreover, we have achieved a complete characterization of the
SINR distribution and the coverage probability for all values of SINR
thresholds, which has not been done before.

\bibliographystyle{IEEEtran}
\bibliography{prasanna}

\appendix

\subsection{\label{sub:proofStEquivalenceLemma}Proof for the Stochastic Equivalence
Lemma}

Given a BS belonging to the $k^{\mathrm{th}}$ tier is at a distance
$R_{k}$ from the origin, then, $\left.\tilde{R}\right|k=\left(P_{k}B_{k}\Psi_{k}\right)^{-1}R_{k}^{\varepsilon_{k}}$
represents the distance of the BS from the origin where the BS arrangement
is according to a non-homogeneous 1-D Poisson point process with BS
density function $\lambda^{\left(k\right)}\left(r\right)=\lambda_{k}\frac{2\pi}{\varepsilon_{k}}\left(P_{k}B_{k}\right)^{\frac{2}{\varepsilon_{k}}}\mathbb{E}\Psi_{k}^{\frac{2}{\varepsilon_{k}}}r^{\frac{2}{\varepsilon_{k}}-1},\ r\ge0,$
as long as $\mathbb{E}\Psi_{k}^{\frac{2}{\varepsilon_{k}}}<\infty,$
for each $k=1,\ 2,\cdots,\ K.$ This is a consequence of the Mapping
theorem \cite[Page 18]{Kingman1993} and the Marking Theorem \cite[Page 55]{Kingman1993}
of the Poisson processes. Further, since the BS arrangements in the
different tiers were originally independent of each other, the set
of all the BSs in the equivalent 1-D non-homogeneous Poisson process
is merely the union of all $\left.\tilde{R}'s\right|k,\ \forall\ k=1,\ 2,\cdots,\ K.$
By the Superposition Theorem \cite[Page 16]{Kingman1993} of Poisson
process, $\tilde{R}$ (notice that it is not conditioned on $k$)
corresponds to the distance from origin of BS arrangement according
to non-homogeneous Poisson point process with density function $\lambda\left(r\right)=\sum_{k=1}^{K}\lambda^{\left(k\right)}\left(r\right),\ r\ge0.$ 

In summary, we have converted the BS arrangement on a 2-D plane of
a $K-$tier network to a BS arrangement of the equivalent single-tier
network along 1-D (positive x-axis), and further the SINR distributions
of both these networks are also equivalent. Further, by our construction,
the BS corresponding to the strongest received power at the MS, in
the $K-$tier network, corresponds to the BS that is nearest to the
origin (MS) in the equivalent single tier network. As a result, $SINR$
may be written in terms of the $\tilde{R}$'s indexed in the ascending
order, and we get $\left(\ref{eq:SINRDistributionEq}\right).$

\subsection{\label{sub:proofRkTailProbTheorem}Tail Probability of Desired BS
Distance from MS}

From the proof of Lemma \ref{lem:StochasticEqLemma}, the arrangement
of the BSs of the $k^{\mathrm{th}}$ tier in the equivalent single-tier
network is according to a non-homogeneous Poisson point process with
BS density function $\lambda^{\left(k\right)}\left(r\right)=\lambda_{k}\frac{2\pi}{\varepsilon_{k}}\left(P_{k}B_{k}\right)^{\frac{2}{\varepsilon_{k}}}\mathbb{E}\Psi_{k}^{\frac{2}{\varepsilon_{k}}}r^{\frac{2}{\varepsilon_{k}}-1},\ r\ge0,$
and $\tilde{R}_{1}^{\left(k\right)}$ represents the distance of the
nearest BS of this random process. Using the properties of the Poisson
point process, $\mathbb{P}\left(\left\{ \tilde{R}_{1}^{\left(k\right)}>r\right\} \right)=\mathbb{P}\left(\left\{ N^{\left(k\right)}\left(\left[0,r\right]\right)=0\right\} \right),$
where $N^{\left(k\right)}\left(\left[0,r\right]\right)$ represents
the average number of BSs in the interval $\left[0,r\right]$ placed
according to the Poisson point process with density $\lambda^{\left(k\right)}\left(r\right),$
and this is equal to $\left(\ref{eq:nearestKthTierBSDistance}\right).$

\subsection{\label{sub:proofDesiredBSTierProbLemma}Desired BS Tier Probability
Lemma}

The following sequence of equations provides the proof. \begin{eqnarray*}
\mathbb{P}\left(\left\{ I=k\right\} \right) & \overset{\left(a\right)}{=} & \mathbb{P}\left(\bigcap_{l=1,\ l\ne k}^{K}\left\{ \tilde{R}_{1}^{\left(k\right)}<\tilde{R}_{1}^{\left(l\right)}\right\} \right)\\
 & \overset{\left(b\right)}{=} & \mathbb{E}_{\tilde{R}_{1}^{\left(k\right)}}\left[\prod_{l=1,\ l\ne k}^{K}\mathbb{P}\left(\left\{ \tilde{R}_{1}^{\left(l\right)}>\tilde{R}_{1}^{\left(k\right)}\right\} \right)\right]\\
 & \overset{\left(c\right)}{=} & \int_{r=0}^{\infty}\lambda_{k}\frac{2\pi}{\varepsilon_{k}}\left(P_{k}B_{k}\right)^{\frac{2}{\varepsilon_{k}}}\mathbb{E}\Psi_{k}^{\frac{2}{\varepsilon_{k}}}r^{\frac{2}{\varepsilon_{k}}-1}\times\\
 &  & \exp\left(-\sum_{l=1}^{K}\lambda_{l}\pi\left(P_{l}B_{l}\right)^{\frac{2}{\varepsilon_{l}}}\mathbb{E}\Psi_{l}^{\frac{2}{\varepsilon_{l}}}r^{\frac{2}{\varepsilon_{l}}}\right)dr,\end{eqnarray*}
where $\left(a\right)$ is obtained by noting that the desired BS
belonging to the $k^{\mathrm{th}}$ tier is closer to the origin than
the nearest BS (to origin) of the rest of the tiers, $\left(b\right)$
is due to the fact that $\left\{ R_{1}^{\left(l\right)}\right\} _{l=1}^{K}$
are independent random variables, and $\mathbb{E}_{R_{1}^{\left(k\right)}}\left[\cdot\right]$
represents the expectation with respect to the random variable $R_{1}^{\left(k\right)},$
$\left(c\right)$ is obtained by using Theorem \ref{thm:nearestKthTierBSDistance}-Equation
$\left(\ref{eq:nearestKthTierBSDistance}\right)$ to compute the probability
of the event in $\left(b\right)$ and to obtain the probability density
function (p.d.f.) of $\tilde{R}_{1}^{\left(k\right)}$ to evaluate
the expectation, and finally, $\left(\ref{eq:TierProbabilityGeneral}\right)$
is obtained by simplifying $\left(c\right).$ 

When $\left\{ \varepsilon_{k}\right\} _{k=1}^{K}=\varepsilon,$ the
integral in $\left(\ref{eq:TierProbabilityGeneral}\right)$ simplifies
to the form $\int_{t=0}^{\infty}\mathrm{e}^{-\alpha t}dt=\frac{1}{\alpha},$
which is rewritten in $\left(\ref{eq:TierProbabilitySpecialCase}\right).$%
\begin{figure*}
\vspace{-0.2in}
\lyxline{\normalsize}\begin{eqnarray}
\mathbb{P}\left(\left\{ \tilde{R}_{1}>r,\ I=k\right\} \right) & \overset{\left(a\right)}{=} & \mathbb{P}\left(\left\{ \tilde{R}_{1}^{\left(k\right)}>r\right\} \bigcap\bigcap_{l=1,\ l\ne k}^{K}\left\{ \tilde{R}_{1}^{\left(l\right)}>\tilde{R}_{1}^{\left(k\right)}\right\} \right)\nonumber \\
 & \overset{\left(b\right)}{=} & \mathbb{E}_{\tilde{R}_{1}^{\left(k\right)}}\left[\mathcal{I}\left(\left\{ \tilde{R}_{1}^{\left(k\right)}>r\right\} \right)\cdot\mathbb{P}\left(\left.\bigcap_{l=1,\ l\ne k}^{K}\left\{ \tilde{R}_{1}^{\left(l\right)}>\tilde{R}_{1}^{\left(k\right)}\right\} \right|\tilde{R}_{1}^{\left(k\right)}\right)\right]\nonumber \\
 & \overset{\left(c\right)}{=} & \mathbb{E}_{\tilde{R}_{1}^{\left(k\right)}}\left[\mathcal{I}\left(\left\{ \tilde{R}_{1}^{\left(k\right)}>r\right\} \right)\cdot\prod_{l=1,\ l\ne k}^{K}\mathbb{P}\left(\left.\left\{ \tilde{R}_{1}^{\left(l\right)}>\tilde{R}_{1}^{\left(k\right)}\right\} \right|\tilde{R}_{1}^{\left(k\right)}\right)\right]\nonumber \\
 & \overset{\left(d\right)}{=} & \mathbb{E}_{\tilde{R}_{1}^{\left(k\right)}}\left[\mathcal{I}\left(\left\{ \tilde{R}_{1}^{\left(k\right)}>r\right\} \right)\cdot\exp\left(-\sum_{l=1,\ l\ne k}^{K}\lambda_{l}\pi\left(P_{l}B_{l}\tilde{R}_{1}^{\left(k\right)}\right)^{\frac{2}{\varepsilon_{l}}}\mathbb{E}\Psi_{l}^{\frac{2}{\varepsilon_{l}}}\right)\right]\nonumber \\
 & \overset{\left(e\right)}{=} & \int_{t=r}^{\infty}\lambda_{k}\frac{2\pi}{\varepsilon_{k}}\left(P_{k}B_{k}\right)^{\frac{2}{\varepsilon_{k}}}\mathbb{E}\Psi_{k}^{\frac{2}{\varepsilon_{k}}}t^{\frac{2}{\varepsilon_{k}}-1}\cdot\exp\left(-\sum_{l=1}^{K}\lambda_{l}\pi\left(P_{l}B_{l}t\right)^{\frac{2}{\varepsilon_{l}}}\mathbb{E}\Psi_{l}^{\frac{2}{\varepsilon_{l}}}\right)dt,\label{eq:proofRkgivenTierTailProb}\\
\mathbb{P}_{coverage}^{open-access} & = & \mathbb{P}\left(\left\{ SINR>\beta_{I}\right\} \right)\overset{\left(a\right)}{=}\mathbb{E}_{I,\tilde{R}_{1}}\left[\mathbb{P}\left(\left.\left\{ \frac{\sum_{m=2}^{\infty}\tilde{R}_{m}^{-1}+\eta}{\tilde{R}_{1}^{-1}}<\beta_{I}^{-1}\right\} \right|I,\ \tilde{R}_{1}\right)\right]\nonumber \\
 & \overset{\left(b\right)}{=} & \mathbb{E}_{I,\tilde{R}_{1}}\left[\int_{x=0}^{\beta_{I}^{-1}}\int_{\omega=-\infty}^{\infty}\Phi_{\left.\sum_{m=2}^{\infty}\tilde{R}_{m}^{-1}+\eta\right|\tilde{R}_{1}}\left(\omega\tilde{R}_{1}\right)\frac{\mathrm{e}^{-i\omega}}{2\pi}d\omega dx\right]\nonumber \\
 & \overset{\left(c\right)}{=} & \mathbb{E}_{I}\left[\int_{\omega=-\infty}^{\infty}\mathbb{E}_{\left.\tilde{R}_{1}^{-1}\right|I}\left[\Phi_{\left.\sum_{m=2}^{\infty}\tilde{R}_{m}^{-1}\right|\tilde{R}_{1}}\left(\omega\tilde{R}_{1}\right)\right]\cdot\frac{\mathrm{e}^{i\omega\eta\tilde{R}_{1}}\left(1-\mathrm{e}^{-\frac{i\omega}{\beta_{I}}}\right)}{2\pi i\omega}d\omega dx\right],\label{eq:proofCovProbTheorem}\end{eqnarray}
\lyxline{\normalsize}\vspace{-0.1in}

\end{figure*}

\subsection{\label{sub:proofPDFRkGivenTier}Proof for the p.d.f. of the Desired
BS Distance given Tier}

We first evaluate the probability of the event $\left\{ \tilde{R}_{1}>r,\ I=k\right\} .$
The steps are given in $\left(\ref{eq:proofRkgivenTierTailProb}\right),$
where $\left(a\right)$ is obtained by noting that the desired BS
belonging to the $k^{\mathrm{th}}$ tier is closer to the origin than
the nearest BS (to origin) of the rest of the tiers, $\left(b\right)$
rewrites joint probability in $\left(a\right)$ in terms of the product
of the marginal and the conditional probabilities, $\left(c\right)$
is obtained by noting that the $\tilde{R}_{1}^{\left(l\right)},\ \forall\ l=1,\ 2,\cdots,\ K$
are independent random variables, $\left(d\right)$ is obtained by
substituting for the tail probability events in $\left(c\right)$
using Theorem \ref{thm:nearestKthTierBSDistance}- Equation $\left(\ref{eq:nearestKthTierBSDistance}\right),$
and finally, $\left(e\right)$ is obtained by first evaluating the
p.d.f. of $\tilde{R}_{1}^{\left(k\right)}$ using $\left(\ref{eq:nearestKthTierBSDistance}\right)$
and expressing the expectation as an integration.

As a result, $\mathbb{P}\left(\left\{ \left.\tilde{R}_{1}>r\right|I=k\right\} \right)=\frac{\mathbb{P}\left(\left\{ \tilde{R}_{1}>r,\ I=k\right\} \right)}{\mathbb{P}\left(\left\{ I=k\right\} \right)},$
and finally, $\left(\ref{eq:pdfServingBSgivenTier}\right)$ is obtained
by noting that $f_{\left.\tilde{R}_{1}\right|I}\left(\left.r\right|k\right)=-\frac{d}{dr}\mathbb{P}\left(\left\{ \left.\tilde{R}_{1}>r\right|I=k\right\} \right).$\vspace{-0.1in}

\subsection{\label{sub:ProofCovProbTheorem}Proof for the Coverage Probability
Theorem}

The sequence of equations in $\left(\ref{eq:proofCovProbTheorem}\right),$
where $\left(a\right)$ is obtained by using Lemma \ref{lem:StochasticEqLemma},
and basic conditional probability properties, $\left(b\right)$ is
obtained by using \cite[Theorem 1]{Madhusudhanan2010a}, using the
BS density function specified in Lemma \ref{lem:StochasticEqLemma}
for $\lambda\left(r\right)$ and 1 for the path-loss exponent, and
$\left(c\right)$ is obtained by exchanging the order of integrations
in $\left(b\right),$ which is valid since the integrals are convergent.
Upon simplifying, we get \begin{eqnarray*}
 &  & \Phi_{\left.\sum_{m=2}^{\infty}\tilde{R}_{m}^{-1}\right|\tilde{R}_{1}}\left(\omega\tilde{R}_{1}\right)=\exp\left(\sum_{l=1}^{K}\lambda_{l}\mathbb{E}\Psi_{l}^{\frac{2}{\varepsilon_{l}}}\pi\left(P_{l}B_{l}\tilde{R}_{1}\right)^{\frac{2}{\varepsilon_{l}}}\right.\\
 &  & \left.\times\left(1-_{1}F_{1}\left(-\frac{2}{\varepsilon_{l}};1-\frac{2}{\varepsilon_{l}};i\omega\right)\right)\right).\end{eqnarray*}
 Further, by evaluating the expectation in $\left(c\right)$ by using
Lemma \ref{lem:pdfServingBSGivenTier} - Equation $\left(\ref{eq:pdfServingBSgivenTier}\right),$
and simplifying, we get $\left(\ref{eq:coverageProbabilityGeneral}\right).$

\subsection{\label{sub:Simulation-Method}Simulation Method}

The $k^{\mathrm{th}}$ tier of the heterogeneous network with $K$
tiers is identified by the following set of system parameters: $\left(\lambda_{k},\ P_{k},\ B_{k},\ \Psi_{k},\ \varepsilon_{k},\ \beta_{k}\right),$
where the symbols have all been defined in Section \ref{sec:modelreview},
and $k=1,\ 2,\cdots,\ K,$ where $K$ is the total number of tiers.
Now we illustrate the steps for simulating the heterogeneous network
in order to obtain the SINR distribution and the coverage probability
in the open-access cell association scheme. Assuming the MS to be
at the origin, a single trial of heterogeneous cellular network arrangement
in a cellular area with $R_{B}$ as the boundary radius involves:

1) Generating the random numbers $N_{k}\sim\mathrm{Poisson}\left(\lambda_{k}\pi R_{B}^{2}\right),$
which is the number of BSs of the $k^{\mathrm{th}}$ that will be
deployed in the trial.

2) Generating $N_{k}$ random variables according to a uniform distribution
in the circular region of area $\pi R_{B}^{2},$ which represents
the location of the $k^{\mathrm{th}}$ tier BSs corresponding to the
trial.

3) Computing the received power at the MS at the origin and computing
the SINR as the ratio of the maximum of the received powers to the
difference of the sum of all the received powers and the maximum received
power.

4) Record the index $I$ which corresponds to the tier to which the
desired BS belongs, for the tier. 

Repeat the same procedure T times. Typically, T is at least 50000.
After this, we have an array containing the instantaneous SINRs and
the tiers to which the desired BSs belonged, corresponding to the
T trials. The tail probability of SINR at a certain point, say $\eta,$
is given by $\frac{\left\{ \mbox{\# of trials where SINR >}\ \eta\right\} }{T},$
and the coverage probability of the MS in the heterogeneous network
is given by $\sum_{k=1}^{K}\frac{\left\{ \mbox{\# of trials where}\ I=k\ \mathrm{and}\ SINR>\beta_{k}\right\} }{T}$. 
\end{document}